\newtheorem{theorem}{Theorem}
\newtheorem{lemma}{Lemma}
\theoremstyle{definition}
\newcommand{\eps}{\varepsilon}
\newcommand{\abs}[1]{\left| #1 \right|}
\newcommand{\norm}[1]{\Vert #1 \Vert}
\newcommand{\RR}{\mathbb{R}}
\newcommand{\CC}{\mathbb{C}}
\newcommand{\PP}{\mathbb{P}}
\newcommand{\brac}[1]{\left(#1\right)}
\newcommand{\fbrac}[1]{\left\{#1\right\}}
\newcommand{\sbrac}[1]{\left[#1\right]}
\newcommand{\abrac}[1]{\langle#1\rangle}
\DeclareMathOperator{\mean}{\mathbb{E}}
\DeclareMathOperator{\supp}{supp}
\DeclareMathOperator{\sgn}{sgn}
\DeclareMathOperator{\cone}{cone}
\DeclareMathOperator{\sph}{\mathbb{S}}
\DeclareMathOperator{\dist}{dist}
\begin{document}

\title{Robust analysis $\ell_1$-recovery from Gaussian measurements 
and total variation minimization}
\date{}
\author{M.~Kabanava\thanks{RWTH Aachen University, Chair for Mathematics C (Analysis), Templergraben 55, 52062 Aachen Germany, \tt{kabanava@mathc.rwth-aachen.de}},  H.~Rauhut\thanks{RWTH Aachen University, Chair for Mathematics C (Analysis), Templergraben 55, 52062 Aachen Germany, \tt{rauhut@mathc.rwth-aachen.de}}, H.~Zhang\thanks{College of Science, National University of Defense Technology, Changsha, Hunan, 410073 China \tt{hhuuii.zhang@gmail.com}}}

\maketitle

\begin{abstract}
Analysis $\ell_1$-recovery refers to a technique of recovering a signal that is sparse in some transform domain from incomplete corrupted measurements.
This includes total variation minimization as an important special case when the transform domain is generated by a difference operator. In the present paper we provide a bound on the number of Gaussian measurements required for successful recovery
for total variation and for the case that the analysis operator is a frame. The bounds are particularly suitable when the sparsity of the analysis representation of the signal is not very small.
\end{abstract}


\section{Introduction}

Compressive sensing is a recent field in signal processing that predicts that sparse vectors can be stably reconstructed from incomplete measurements via efficient algorithms \cite{fora13,elku12}.
Traditionally, the synthesis sparsity model is used in this context, where it is assumed that the signal can be written as a linear combination of a few elements of an appropriate basis.
Recently, the analysis sparsity model (or cosparsity model) has attracted significant interest as well \cite{caelnera11,daelgrna11,daelgigrna14,kara13}, where it is assumed that the signal is sparse after a transformation.
In the case of a basis transformation, the synthesis and analysis models coincide, but if the transform is redundant then the analysis sparsity model is different, and in fact the class of analysis sparsity
models is richer than the class of synthesis sparsity models, see \cite{EladMilanfarRubinstein,daelgrna11} for further details.
While the naive recovery approach of $\ell_0$-minimization is NP-hard, one may use convex relaxations, i.e., $\ell_1$-minimization in both the synthesis and analysis sparsity cases. Especially in the synthesis sparsity case, $\ell_1$-minimization is by now rather well understood, see e.g.\ \cite{fora13}. Despite recent progress in the theory of
analysis $\ell_1$-minimization \cite{caelnera11,daelgrna11,daelgigrna14,kara13},
there still remain a number of questions to be explored. In particular, in the important special case of total variation minimization which is ubiquitious in image processing
only a few contributions analyzing bounds for recovery from underdetermined measurements are available \cite{newa12,NeedellWard13,CaiXu}. The results of \cite{newa12,NeedellWard13} cover stable and robust recovery of signals in $\CC^{d^n}$ of arbitrary dimension $n\geq 2$. They rely on the restricted isometry property and bounds on the decay of wavelet coefficients of high-dimensional signals by its total variation semi-norm. The authors of \cite{CaiXu} provide a bound on the number of Gaussian measurements that guarantee stable and robust recovery of signals in $\RR^{d^n}$ with $n\geq 1$. The recovery is proved by establishing the null space property for the measurement matrix. For $n=1$ and fixed signal dimension $d$ bounds on the number of measurements were computed using Gordon's escape through a mesh theorem. In the asymptotic regime when $d\to\infty$ the performance guarantees explore the Grassmann angle framework. This article contributes to the topic of total variation minimization by providing a bound on the number of Gaussian random measurements in order to recover a gradient sparse signal via total variation minimization.
Moreover, we also provide an alternative bound to the one in \cite{kara13}  on the number
of required Gaussian measurements for recovery of analysis-sparse signals with respect to a frame.

In contrast to \cite{newa12,NeedellWard13,CaiXu} which establish uniform recovery of all signals simultaneously with a single draw of a measurement matrix, our main results concern so-called nonuniform recovery using a Gaussian random measurement matrix, i.e., we fix a sparse (or rather cosparse)
vector (with respect to a given analysis operator) and provide bounds that guarantee that the given vector is recovered via analysis $\ell_1$-minimization
with high probability. Our bounds are particularly good when the sparsity is not very small compared to the ambient dimension. In fact, in the case of analysis-sparsity with respect to a very redundant frame,
there are natural lower bounds for the sparsity, so that in this context
the bounds in \cite{kara13} may turn out to be trivial in certain situations as they require more measurements than the ambient dimension. In contrast, the bounds
derived in this paper are always non-trivial in the sense that the number of required measurements is always lower than the ambient dimension.

Following \cite{chparewi12}, our analysis is based on estimating widths of tangent cones of a transformed $\ell_1$-norm at the (co-)sparse vector to be recovered, similarly
to \cite{kara13} or \cite[Chapter 9.2]{fora13}.
This is in contrast to uniform recovery bounds which are often based on the restricted isometry property or the null space property \cite{fora13}, see also \cite{kara13} for precise bounds
for an analysis sparsity version of the null space property for Gaussian random measurements.

%

In mathematical terms, we wish to recover a signal $x\in\RR^d$ from measurements
\begin{equation}\label{eqMeasurements}
y=Mx+w,
\end{equation}
where $M\in\RR^{m\times d}$ is a measurement matrix and $w \in \RR^m$ with $\norm{w}_2\leq\eta$ corresponds to noise.
When $m\ll d$, there are infinitely many solutions to (\ref{eqMeasurements}). However, the prior sparsity knowledge about the underlying signal $x$ makes its recovery possible.

We assume that $x$ possesses  a structure generated by a matrix $\Omega \in \RR^{p \times d}$, called the analysis operator, in the sense that the application of $\Omega$ to $x$ produces a vector
with a small number of non-zero entries.  If $\Omega x$ has $s$ non-zero entries, then $x$ is called $\ell$-\emph{cosparse}, where the number $\ell:=p-s$ is refered to as \emph{cosparsity} of $x$
(with respect to $\Omega$).
The index set of the zero entries of $\Omega x$ is called the \emph{cosupport} of $x$. Analysis $\ell_1$-minimization tries to recover the signal by computing the minimizer of
\begin{equation}\label{eqProblemP1Noise}
\underset{z\in\RR^d}\min\;\norm{\Omega z}_1\;\;\mbox{subject to}\;\;\; \norm{Mz-y}_2\leq\eta.
\end{equation}

In this paper we consider two prominent examples of the analysis operator. The method of total variation corresponds to the program (\ref{eqProblemP1Noise}), when $\Omega$ is a difference operator. In the one-dimensional case it is defined by the matrix
\begin{equation}\label{eq:OneDimDifferenceOperator}
\Omega= \left(\begin{matrix} -1 & 1 & 0 & \cdots & 0\\
0 & -1 & 1 & \cdots & 0 \\
\vdots & & \ddots & \ddots & \vdots \\
0 & \cdots & 0 &  -1 & 1  \end{matrix}\right).
\end{equation}
In this setting (\ref{eqProblemP1Noise}) promotes piecewise constant signals with sparse gradient.
Another important example of the analysis operator appears when the rows $\omega_j$ of $\Omega$ form a frame, i.e., if there exist constants $0 < A\leq B < \infty$ such that
\begin{equation}\label{eq:Frame}
A \|x\|_2^2 \leq \| \Omega x \|_2^2 = \sum_{j=1}^p |\langle \omega_j,x\rangle|^2 \leq B \|x\|_2^2.
\end{equation}

We are interested in the minimal number of measurements in terms of the sparsity (or cosparsity) required to recover a cosparse vector from its measurements $y$ in \eqref{eqMeasurements} when
the matrix $M$ is a Gaussian random matrix, i.e., its entries are independent standard normal distributed random variables.
We rely on a recent result of Foygel and Mackey in \cite{foma14} which is in spirit of the geometric approach of \cite{chparewi12} in order to provide such a bound on the number of Gaussian
measurements needed to recover $x$ via (\ref{eqProblemP1Noise}), when $\Omega$ satisfies either (\ref{eq:OneDimDifferenceOperator}) or (\ref{eq:Frame}).


\subsection{Main Results}

We first provide a general bound for the required number $m$ of Gaussian measurements in order to (stably) recover an analysis-sparse signal $x \in \RR^d$
via analysis $\ell_1$-minimization, see Theorem~\ref{thNumberOfMeasurementsForDecomposableNorms}.
An important feature of the result is that $m$ is always (essentially) less than the ambient dimension $d$. Based on this, we show that
a signal $x \in \RR^d$
which is $s$-sparse with respect to the difference operator $\Omega$ in \eqref{eq:OneDimDifferenceOperator} (that is, whose gradient is $s$-sparse) can be recovered with high probability if ``roughly'', i.e., ignoring terms
of lower order
\begin{equation}\label{m:TV:rough}
m > d\left( 1- \frac{1}{\pi}(1- \frac{s+1}{d})^2\right),
\end{equation}
see Theorem~\ref{thNumberOfMeasurementsForL1NormDifOperator}. In \eqref{m:TV:rough}, the number of measurements is clearly less than $d$. Note that the usual bound in compressive sensing require
\begin{equation}\label{m:standard}
m > c s \log(d/s)
\end{equation}
for recovery of $s$-sparse vectors via $\ell_1$-minimization from Gaussian random measurements \cite[Chapter 9]{fora13}, \cite{chparewi12}.
One realizes a structural difference of this bound to the new one stated above.
In fact, even for sparsity $s=1$, \eqref{m:TV:rough} roughly requires $m > (1-\pi^{-1}) d \approx  0.6817 \, d$, while \eqref{m:standard} requires $m \geq c \log(d)$, which is of much smaller order. It should be pointed out that a bound of the form (\ref{m:standard}) is so far not available for TV-minimization with $1$-dimensional difference operator. For this particular case a bound which resembles (\ref{m:standard}) is obtained in \cite{CaiXu} and it requires 
\begin{equation}\label{m:CaiXuResult}
m\geq c(sd)^{1/2} \log(d).
\end{equation}
However, if $s$ is proportional to $d$, i.e., $s = \alpha d$ for some $\alpha \in (0,1)$ then \eqref{m:TV:rough} requires $m \geq c_\alpha d$ with $c_\alpha = 1-\frac{1}{\pi}(1-\alpha)^2$ which is always less than $1$, while
\eqref{m:standard} and \eqref{m:CaiXuResult} give $m \geq c'_\alpha d$ with $c'_\alpha = c \alpha \ln( \alpha^{-1})$ which may become larger than $1$ because the available estimates for $c$ in \eqref{m:standard} are larger than $2$ \cite{fora13}.

As the second case, we consider $\Omega \in \RR^{p \times d}$ to be a frame with frame bounds $A,B > 0$ in \eqref{eq:Frame}. Theorem~\ref{thNumberOfMeasurementsForL1NormFrameOperator}
below shows that a signal $x \in \RR^d$ which is $\ell$-cosparse
with cosupport $\Lambda$, i.e., $\supp \Omega x = \Lambda^c$, can be recovered from $m$ Gaussian measurements via analysis $\ell_1$-minimization with high probability if, roughly speaking,
\begin{equation}\label{m:frame:rough}
m > d - \frac{2}{\pi p B}\left(\sum_{\ell \in \Lambda} \|\omega_\ell\|_2\right)^2.
\end{equation}
According to \eqref{m:frame:rough},
the number of measurements is always less than $d$, and even though there is no direct dependence on lower frame bound $A$, it is still independent of the scaling of $\Omega$. The bound derived in \cite{kara13} for analysis sparse recovery with respect to a frame roughly requires
\begin{equation}\label{m:prev:frame}
m \geq \frac{2B}{A} s \ln(ep/s),
\end{equation}
see also \cite{kara14}.
In order to place the bound \eqref{m:frame:rough} of Theorem~\ref{thNumberOfMeasurementsForL1NormFrameOperator} for the frame case into context, we recall some facts on the analysis sparsity model.
Consider the generic case, that the rows of the analysis operator $\Omega$, that is, the frame elements $\omega_j$, $j = 1,\hdots,p$, are in general linear position so that a subcollection of $d$ frame elements
is always linearly independent. Then any subspace $W_\Lambda = \operatorname{span} \{ \omega_j, j \in \Lambda \}^\perp$ with $\#\Lambda = \ell$, of $\ell$-cosparse vectors,
where ${}^\perp$ denotes the orthogonal complement, has dimension $d-\ell$, which means that the smallest value that the sparsity parameter $s = p- \ell$ can take for a nontrivial vector $x$
is $p-d+1$, see also \cite{daelgrna11,kara14}. Therefore, if $p = \kappa d$ for some $\kappa > 1$, then the sparsity is always proportional to the dimension $d$, i.e., $s \sim \alpha d$
which means that the bound \eqref{m:frame:rough} reads $m\geq c_{\kappa,\alpha}d$ with $c_{\kappa, \alpha}=1-\frac{2(\kappa-\alpha)^2}{\pi\kappa B}$, assuming that the frame is normalized. At the same time the $\log$ factor in \eqref{m:standard} becomes constant for $s$ proportional to $d$ and if $\kappa > 2$, then the sparsity is always at least as large as the ambient dimension, so that
\eqref{m:standard} becomes a trivial bound. Similar considerations
apply to the previous bound in \cite{kara13} on analysis-sparse recovery with respect to frames.
In contrast, let us consider the new bound \eqref{m:frame:rough} for a tight unit norm frame, i.e., $\|\omega_j\|_2=1$ and $A=B = p/d$.
For cosparsity $\# \Omega = \ell = p-s$, \eqref{m:frame:rough} yields then
\[
m > d - \frac{2d}{\pi p^2}(p-s)^2 = \left(1- \frac{2(p-s)^2}{\pi p^2}\right) d.
\]
For sparsity proportional to $p$, the number of measurements clearly scales like $\beta d$ where $\beta$ is always strictly smaller than $1$,
which shows that our new bound is better in the natural regime $p = \kappa d$ with
$\kappa \geq 2$, say. In contrast, the previous bound \eqref{m:prev:frame} requires $m$ to be larger than $d$ for a frame in general position with $p = \kappa d$, see Figure \ref{fig:ComparisonOfBounds}.
\begin{figure}
\centering
\includegraphics[scale=0.5]{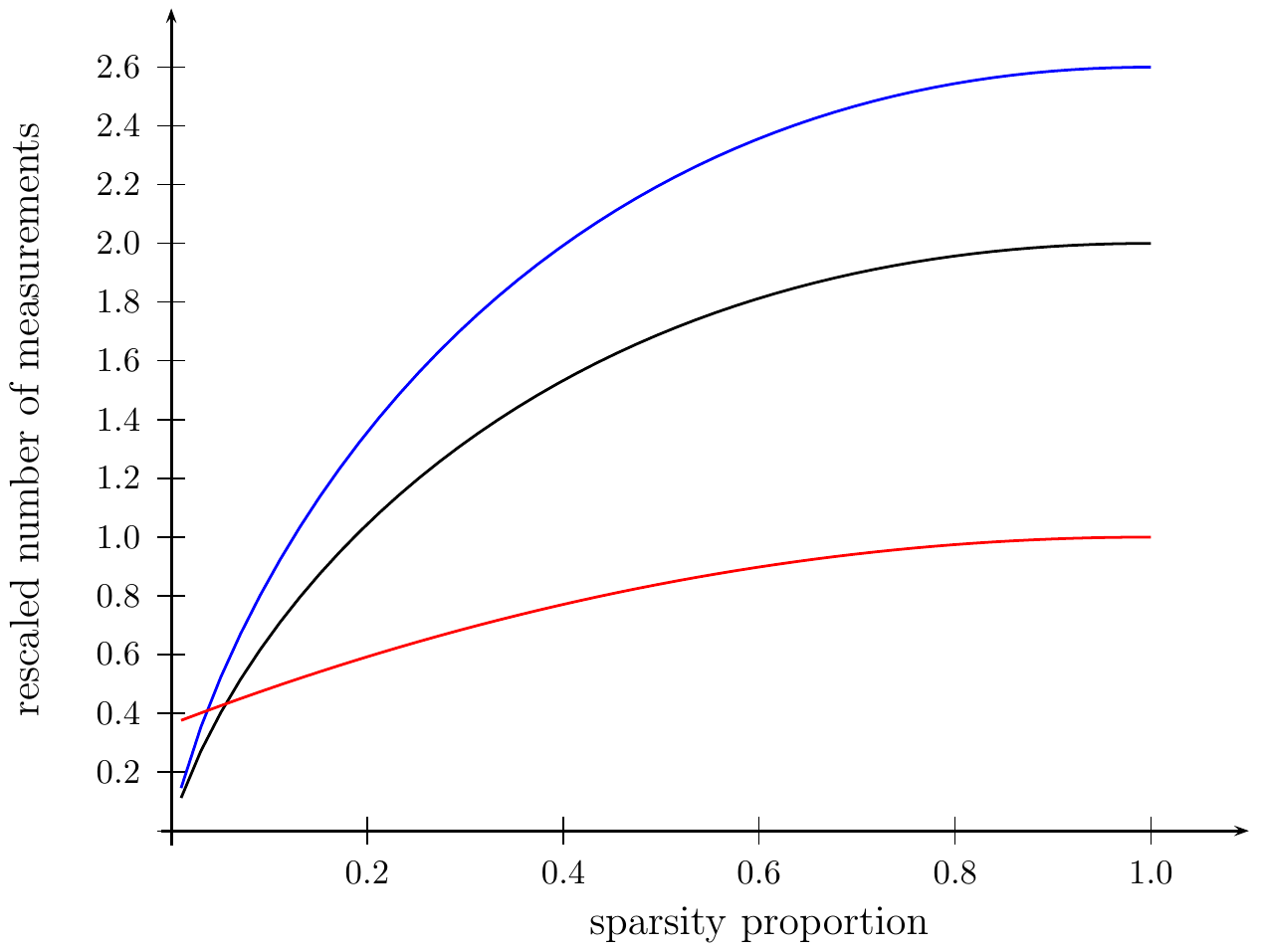}
\caption{Comparison of the standard recovery bounds and the new one for the setting of a normalized tight frame with frame bounds $A=B=\frac{p}{d}$. The red line corresponds to the new bound on the rescaled number of measurements $\frac{m}{d}$ as a function of sparsity fraction $\frac{s}{p}$. The black and blue lines represent the bound given by (\ref{m:prev:frame}) for $p=d$ and $p=1.3d$ respectively.}
\label{fig:ComparisonOfBounds}
\end{figure}


\subsection{Notation}
We denote the rows of $\Omega\in\RR^{p\times d}$ by $\omega_j$, $j=1,\dots,p$. We use $\Omega_{\Lambda}$ to refer to a submatrix of $\Omega$ with the rows indexed by $\Lambda$; $\Omega^T$ is the transpose of $\Omega$; $\norm{\Omega\cdot}_1$ denotes a semi-norm generated by the transform $\Omega$; $\alpha_{\Lambda}$ stands for the vector whose entries indexed by $\Lambda$ coincide with the entries of $\alpha$ and the rest are filled by zeros.
The sign of a real number $r\neq 0$ is $\sgn(r)=\frac{r}{\abs{r}}$.
For a vector $\alpha\in\RR^p$ we define its sign vector $\sgn(\alpha)\in\RR^p$ by
\[
(\sgn(\alpha))_i=\left\{\begin{array}{cl}
\frac{\alpha_i}{\abs{\alpha_i}}, & \mbox{if }\,\alpha_i\neq 0,\\
0, & \mbox{otherwise}.
\end{array}\right.
\]
We write $S^{d-1}$ for the unit sphere in $\RR^d$.


\section{Recovery from Gaussian measurements}

\subsection{Theoretical guarantees}
Our analysis is based on work of Chandrasekaran et al.\ \cite{chparewi12}, where sufficient conditions for robust recovery make use of the tangent cone (also called descent cone) of
a convex function (e.g., the $\ell_1$-norm) at the signal to be recovered, see also \cite{tr14}.
In the case of analysis $\ell_1$-minimization these are formulated explicitly in \cite{kara13}. For a fixed $x\in\RR^d$ we define a tangent cone as
\[
T(x)=\cone\{z-x:z\in\RR^d,\;\norm{\Omega z}_1\leq\norm{\Omega x}_{1}\},
\]
where the notation ``cone'' stands for the conic hull of the indicated set. The set $T(x)$ consists of the directions from $x$, which do not increase the value of $\norm{\Omega x}_{1}$.
\begin{theorem}[\cite{kara13}]\label{thm:rec}
Let $x\in\RR^d$. If
\begin{equation}\label{eqTangentConeWithNoise}
\underset{\substack{v\in T(x)\\ \norm{v}_2\leq 1}}\inf \norm{Mv}_2\geq\tau,
\end{equation}
for some $\tau>0$, then a minimizer $\hat x$ of (\ref{eqProblemP1Noise}) satisfies
\[
\norm{x-\hat x}_2\leq \frac{2\eta}{\tau}.
\]
\end{theorem}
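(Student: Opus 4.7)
The plan is to exploit three facts: optimality of $\hat{x}$, feasibility of both $x$ and $\hat{x}$ in (\ref{eqProblemP1Noise}), and the lower bound $\tau$ along every unit vector in $T(x)$.

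First I would observe that since $y = Mx + w$ with $\norm{w}_2 \leq \eta$, the vector $x$ itself is feasible for (\ref{eqProblemP1Noise}). Optimality of $\hat{x}$ then forces $\norm{\Omega \hat{x}}_1 \leq \norm{\Omega x}_1$, and so, directly from the definition of the tangent cone, the error vector $\hat{x} - x$ lies in $T(x)$. Second, the triangle inequality combined with the feasibility of both $x$ and $\hat{x}$ yields the upper bound
\[
\norm{M(\hat{x}-x)}_2 \leq \norm{M\hat{x}-y}_2 + \norm{y-Mx}_2 \leq \eta + \eta = 2\eta.
\]

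If $\hat{x} = x$ the conclusion is trivial, so assume $\hat{x} \neq x$ and set $v := (\hat{x}-x)/\norm{\hat{x}-x}_2$. Since $T(x)$ is a cone, $v \in T(x)$, and by construction $\norm{v}_2 = 1$, so hypothesis (\ref{eqTangentConeWithNoise}) gives $\norm{Mv}_2 \geq \tau$, i.e.\
\[
\norm{M(\hat{x}-x)}_2 \geq \tau \norm{\hat{x}-x}_2.
\]
Chaining this lower bound with the upper bound $2\eta$ from the previous step and dividing by $\tau$ produces the claimed estimate.

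I do not foresee a genuine obstacle here; the statement is essentially a geometric tautology once one recognises that the minimum conic singular value $\tau$ of $M$ restricted to $T(x)$ converts a measurement-space bound into a signal-space bound. The only point requiring mild care is the degenerate case $\hat{x} = x$, handled separately above, and the implicit use of the data model $y = Mx + w$ to make $x$ admissible in (\ref{eqProblemP1Noise}).
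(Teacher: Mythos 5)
Your proof is correct and is exactly the standard argument from the cited reference \cite{kara13} (the paper itself states this theorem without proof): feasibility of $x$ gives $\hat x - x \in T(x)$ via optimality, the triangle inequality bounds $\norm{M(\hat x - x)}_2 \leq 2\eta$, and the conic lower bound converts this to $\norm{\hat x - x}_2 \leq 2\eta/\tau$. No gaps; the separate treatment of $\hat x = x$ and the use of the data model to make $x$ admissible are handled properly.
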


\subsection{Gordon's escape through a mesh theorem}
According to (\ref{eqTangentConeWithNoise}), successful recovery of a signal is achieved, when the minimal gain of the measurement matrix over the tangent cone is greater than some positive constant. For Gaussian matrices the probability of this event can be estimated by Gordon's escape through a mesh theorem \cite{go88-2}, see also \cite[Theorem 9.21]{fora13}. In order to present Gordon's result formally, we introduce some notation.

Let $g\in\RR^m$ be a standard Gaussian random vector. Then
\[
E_m:=\mean\norm{g}_2=\sqrt{2}\;\frac{\Gamma\brac{(m+1)/2}}{\Gamma\brac{m/2}},
\]
where $\Gamma$ is the standard Gamma function, satisfies
\[
\frac{m}{\sqrt{m+1}}\leq E_m\leq\sqrt{m}.
\]
For a set $T\subset\RR^d$ we define its Gaussian width by
\[
\ell(T):=\mean\underset{x\in T}\sup\;\abrac{x,g},
\]
where $g\in\RR^d$ is a standard Gaussian random vector.
\begin{theorem}[Gordon's escape through a mesh]\label{thGordonsEscapeThroughTheMesh}
Let $M\in\RR^{m\times d}$ be a Gaussian random matrix and $T$ be a subset of the unit sphere $S^{d-1}=\{x\in\RR^d: \norm{x}_2=1\}$. Then, for $t>0$, 
\begin{equation}\label{eqGordonsEscapeThroughTheMesh}
\PP\brac{\underset{x\in T}\inf\norm{Mx}_2> E_m-\ell(T)-t}\geq 1-e^{-\frac{t^2}{2}}.
\end{equation}
\end{theorem}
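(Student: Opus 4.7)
My plan is to derive Gordon's escape theorem in two stages: first establish a lower bound on the expectation $\mathbb{E}\inf_{x\in T}\norm{Mx}_2$ via a Gaussian process comparison, and then upgrade this to a tail bound via Gaussian concentration.

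For the expectation step, I would write
\[
\inf_{x\in T}\norm{Mx}_2 = \inf_{x\in T}\sup_{y\in S^{m-1}} \abrac{Mx,y}
\]
and view the expression inside as a centered Gaussian process $X_{x,y} := \abrac{Mx,y}$ indexed by $T\times S^{m-1}$. The natural comparison process is the decoupled one $Y_{x,y} := \abrac{g,x} + \abrac{h,y}$, where $g\in\RR^d$ and $h\in\RR^m$ are independent standard Gaussian vectors. A routine computation of second moments gives, for unit vectors,
\[
\mean(X_{x,y}-X_{x',y'})^2 = 2-2\abrac{x,x'}\abrac{y,y'}, \qquad \mean(Y_{x,y}-Y_{x',y'})^2 = 4-2\abrac{x,x'}-2\abrac{y,y'},
\]
and the difference factors as $2(1-\abrac{x,x'})(1-\abrac{y,y'})\geq 0$, with equality whenever $x=x'$. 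This is precisely the hypothesis of Gordon's min-max comparison inequality (a two-parameter Slepian-type theorem), which then yields
\[
\mean\inf_{x\in T}\sup_{y\in S^{m-1}} X_{x,y} \;\geq\; \mean\inf_{x\in T}\sup_{y\in S^{m-1}} Y_{x,y}.
\]
The right-hand side decouples: $\sup_{y\in S^{m-1}}\abrac{h,y}=\norm{h}_2$ is independent of $x$, while $\inf_{x\in T}\abrac{g,x} = -\sup_{x\in T}\abrac{-g,x}$ has expectation $-\ell(T)$ because $-g$ has the same distribution as $g$. Thus $\mean\inf_{x\in T}\norm{Mx}_2 \geq E_m - \ell(T)$.

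For the tail bound I would note that $F(M) := \inf_{x\in T}\norm{Mx}_2$ is $1$-Lipschitz on $\RR^{m\times d}$ equipped with the Frobenius norm: for each fixed $x\in S^{d-1}$ the map $M\mapsto \norm{Mx}_2$ is $1$-Lipschitz, and pointwise infimum preserves Lipschitz constants. Since $M$ has i.i.d.\ standard Gaussian entries, the Gaussian concentration inequality for Lipschitz functions yields
\[
\PP\brac{F(M) < \mean F(M) - t} \leq e^{-t^2/2}.
\]
Combined with the expectation lower bound $\mean F(M) \geq E_m-\ell(T)$, this produces the desired inequality~(\ref{eqGordonsEscapeThroughTheMesh}).

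The main obstacle is the correct formulation and application of Gordon's two-parameter Slepian-type comparison, since the direction of inequality depends delicately on which variable carries the infimum and which the supremum; the factorization $2(1-\abrac{x,x'})(1-\abrac{y,y'})$ is the exact identity that makes both the ``same-row'' equality (when $x=x'$) and the ``different-row'' inequality point in the directions Gordon's theorem requires. Once that alignment is secured, the remaining steps (evaluating the decoupled expectation and invoking Gaussian concentration) are standard.
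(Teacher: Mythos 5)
Your proposal is correct, but note that the paper does not prove this statement at all: it is quoted as a classical result of Gordon, with references to \cite{go88-2} and \cite[Theorem 9.21]{fora13}. Your two-stage argument (Gordon's minimax comparison to lower-bound the expectation, then Gaussian concentration for the $1$-Lipschitz function $M\mapsto\inf_{x\in T}\norm{Mx}_2$ to get the tail) is precisely the standard proof given in those references, and all of your covariance computations check out: for unit vectors the increment gap is indeed $2(1-\abrac{x,x'})(1-\abrac{y,y'})\geq 0$, vanishing when $x=x'$, which is exactly what the comparison needs. One technical point deserves care: your processes $X_{x,y}=\abrac{Mx,y}$ and $Y_{x,y}=\abrac{g,x}+\abrac{h,y}$ do \emph{not} have equal variances ($1$ versus $2$), so if you invoke the version of Gordon's theorem stated with the hypothesis $\mean X_{x,y}^2=\mean Y_{x,y}^2$, your argument as written does not apply directly; you must either use the increment-only formulation of Gordon's comparison (as in \cite[Theorem 8.44]{fora13}) or add an independent scalar Gaussian term $g_0\norm{x}_2\norm{y}_2$ to the coupled process, which leaves all increments and the final expectation unchanged while equalizing the variances. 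With that caveat made explicit, the proof is complete and standard.
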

Thus, to provide a bound on the number of Gaussian measurements, we estimate the Gaussian width of the tangent cone $T(x)$ intersected with the unit sphere.

\subsection{Estimates for the Gaussian width}

A basic technique to estimate the Gaussian width of the tangent cone was developed in the work of Stojnic \cite{st09-6} and was later refined in a series of papers \cite{amlomctr13,chparewi12}, see also \cite{tr14}.
The result states that the Gaussian width of the tangent cone can be bounded by the Euclidean distance of the standard normal vector to the scaled cone generated by the subdifferential.

Recall that the Euclidean distance to a set $T \subset \RR^d$ is the function defined by
\[
\dist(x,T):=\inf\fbrac{\norm{x-u}_2:u\in T}.
\]
The subdifferential of a convex function $f: \RR^d \to \RR$ at a point $x \in \RR^d$ is the set
\[
\partial f(x) = \{ v \in \RR^d : f(z) \geq f(x) + \langle z-x,v \rangle \mbox{ for all } z \in \RR^d \}.
\]
It is shown in \cite{st09-6,chparewi12} that
\begin{equation}\label{eqGaussianWidthEuclideanDistance}
\ell(T(x)\cap\sph^{d-1})^2\leq\underset{t\geq 0}\inf\mean\sbrac{\dist(g,t\cdot\partial\norm{\Omega\cdot}_{1}(x))}^2,
\end{equation}
where $g\in\RR^d$ is the standard normal vector.

To provide a bound on the expected squared distance from (\ref{eqGaussianWidthEuclideanDistance}) we generalize Proposition 3 in \cite{foma14} valid for $\Omega$ being a basis to the following result.
\begin{lemma}
Let $\Omega\in\RR^{p\times d}$ be an analysis operator, $g\in\RR^d$ be a standard normal random vector and $x\in\RR^d$. Then
\begin{equation}\label{eqEstimateGWFoygel}
\underset{t\geq 0}\inf\mean\sbrac{\dist(g,t\cdot\partial\norm{\Omega\cdot}_{1}(x))}^2\leq d-\frac{\sbrac{\ell(\partial\norm{\Omega\cdot}_{1}(x))}^2}{\underset{\norm{z}_2\leq 1}\max\norm{\Omega z}_{1}^2}.
\end{equation}
\end{lemma}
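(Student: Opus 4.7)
The plan is to write $K := \partial \|\Omega\cdot\|_1(x)$, set $R := \max_{\|z\|_2 \leq 1} \|\Omega z\|_1$, and for each realization of $g$ bound the distance $\dist(g,tK)$ by $\|g - tv^\ast(g)\|_2$ for a judicious choice $v^\ast(g) \in K$, namely the maximizer of $\langle g, v \rangle$ over $K$ (it exists because $K$ is compact). Since $v^\ast(g) \in tK$ after rescaling, for every $t \geq 0$
\begin{equation*}
\dist(g,tK)^2 \leq \|g - tv^\ast(g)\|_2^2 = \|g\|_2^2 - 2t\langle g, v^\ast(g)\rangle + t^2 \|v^\ast(g)\|_2^2.
\end{equation*}
Taking expectations, the first term contributes $d$, the cross term contributes $-2t\,\ell(K)$ by the definition of the Gaussian width, and what remains is to control $\mathbb{E}\|v^\ast(g)\|_2^2$ uniformly.

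The core algebraic input is the chain-rule description of the subdifferential:
\begin{equation*}
K = \Omega^T\, \partial \|\cdot\|_1(\Omega x) \subset \{\Omega^T u : u \in \RR^p,\; \|u\|_\infty \leq 1\}.
\end{equation*}
For any such $v = \Omega^T u$ in $K$, duality between $\ell_1$ and $\ell_\infty$ gives
\begin{equation*}
\|v\|_2 = \sup_{\|z\|_2 \leq 1} \langle z, \Omega^T u\rangle = \sup_{\|z\|_2 \leq 1}\langle \Omega z, u\rangle \leq \|u\|_\infty \max_{\|z\|_2 \leq 1}\|\Omega z\|_1 \leq R.
\end{equation*}
Hence $\|v^\ast(g)\|_2 \leq R$ almost surely, independently of $g$.

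Combining these ingredients yields, for every $t \geq 0$,
\begin{equation*}
\mathbb{E}[\dist(g,tK)^2] \leq d - 2t\,\ell(K) + t^2 R^2.
\end{equation*}
Optimizing the right-hand side in $t$ (the minimizer is $t^\ast = \ell(K)/R^2 \geq 0$) gives exactly $d - \ell(K)^2 / R^2$, which is the claimed bound.

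The only delicate point is not really a difficulty but worth a line: one needs that $v^\ast(g)$ can be chosen as a measurable function of $g$, which follows from a standard measurable-selection argument since $K$ is compact and convex and $v \mapsto \langle g, v\rangle$ is continuous. Everything else is the identity $\mathbb{E}\|g\|_2^2 = d$, linearity of expectation in the cross term to obtain $\ell(K)$, and the uniform Euclidean bound on $K$ derived from the chain rule. Relative to the Foygel--Mackey proof for the canonical basis, the only genuinely new ingredient is the norm bound $\|v\|_2 \leq R$ in place of the trivial bound available when $\Omega = \mathrm{Id}$.
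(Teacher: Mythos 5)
Your proof is correct and follows essentially the same route as the paper's: pick the maximizer $w_0$ of $\langle g,\cdot\rangle$ over the (compact) subdifferential, expand $\|g-tw_0\|_2^2$, bound $\|w_0\|_2$ by $\max_{\|z\|_2\leq 1}\|\Omega z\|_1$ via the chain rule $\partial\norm{\Omega\cdot}_1(x)=\Omega^T\partial\norm{\cdot}_1(\Omega x)$ and $\ell_1$--$\ell_\infty$ duality, take expectations, and optimize over $t$. Your added remark on measurable selection of $v^\ast(g)$ is a point the paper passes over silently, but it does not change the argument.
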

\begin{proof}
Since $\partial \norm{\Omega\cdot}_1(x)$ is a compact set, there exists $w_0\in \partial \norm{\Omega\cdot}_1(x)$ such that
\[
\abrac{g,w_0}=\underset{w\in\partial \norm{\Omega\cdot}_1(x)}\max\abrac{g,w}.
\] 
Then 
\[
\dist(g,t\cdot\partial\norm{\Omega\cdot}_{1}(x))^2\leq \norm{g-tw_0}_2^2=\norm{g}_2^2-2t\abrac{g,w_0}+t^2\norm{w_0}_2^2.
\]
Moreover,
\[
\partial\norm{\Omega\cdot}_{1}(x)=\Omega^T\brac{\partial\norm{\cdot}_{1}(\Omega x)}
\]
and
\[
\partial\norm{\cdot}_1(\Omega x)=\fbrac{\alpha\in\RR^p:\alpha_{\Lambda^c}=\sgn(\Omega x),\;\;\norm{\alpha_{\Lambda}}_{\infty}\leq 1},
\]
which implies that $w_0=\Omega^T\alpha_0$ for some $\alpha_0\in\partial\norm{\cdot}_1(\Omega x)$. Thus
\[
\dist(g,t\cdot\partial\norm{\Omega\cdot}_{1}(x))^2\leq\norm{g}_2^2-2t\underset{w\in\partial \norm{\Omega\cdot}_1(x)}\max\abrac{g,w}+t^2\norm{\Omega^T\alpha_0}_2^2.
\]
Due to duality and the fact that $\norm{\alpha_0}_{\infty}\leq 1$ we obtain
\[
\norm{\Omega^T\alpha_0}_2=\underset{\norm{z}_2\leq 1}\max\abrac{z,\Omega^T\alpha_0}=\underset{\norm{z}_2\leq 1}\max\abrac{\Omega z,\alpha_0}\leq\underset{\norm{z}_2\leq 1}\max\norm{\Omega z}_1\norm{\alpha_0}_{\infty}\leq\underset{\norm{z}_2\leq 1}\max\norm{\Omega z}_1.
\]
So altogether 
\[
\dist(g,t\cdot\partial\norm{\Omega\cdot}_{1}(x))^2\leq \norm{g}_2^2-2t\underset{w\in\partial \norm{\Omega\cdot}_1(x)}\max\abrac{g,w}+t^2\underset{\norm{z}_2\leq 1}\max\norm{\Omega z}_1^2
\]
and by taking the expectation of both sides we obtain
\[
\mean\dist(g,t\cdot\partial\norm{\Omega\cdot}_{1}(x))^2\leq d-2t\ell(\partial\norm{\Omega\cdot}_1(x))+t^2\underset{\norm{z}_2\leq 1}\max\norm{\Omega z}_1^2.
\]
Setting $t=\frac{\ell(\partial\norm{\Omega\cdot}_1(x))}{\underset{\norm{z_2}\leq 1}\max\norm{\Omega z}_1^2}$ yields
\[
\underset{t\geq 0}\inf\mean\sbrac{\dist(g,t\cdot\partial\norm{\Omega\cdot}_{1}(x))}^2\leq d-\frac{\sbrac{\ell(\partial\norm{\Omega\cdot}_{1}(x))}^2}{\underset{\norm{z}_2\leq 1}\max\norm{\Omega z}_{1}^2}.
\]
\end{proof}


\subsection{Number of Gaussian measurements}
Gordon's escape through a mesh, Theorem \ref{thGordonsEscapeThroughTheMesh}, together with the estimates (\ref{eqGaussianWidthEuclideanDistance}) and (\ref{eqEstimateGWFoygel}) leads to the next result.
%
\begin{theorem}\label{thNumberOfMeasurementsForDecomposableNorms}
Let $\Omega\in\RR^{p\times d}$ be an analysis operator and $x$ be cosparse with cosupport $\Lambda$. For a random draw $M\in\RR^{m\times d}$ of a Gaussian matrix, let noisy measurements $y=Mx+w$ be given with $\norm{w}_2\leq\eta$ and $0<\eps<1$. If
\begin{equation}\label{eqNumberOfMeasurementsByGW}
\frac{m^2}{m+1}\geq \brac{\sqrt{d-\frac{\brac{\mean\norm{\Omega_{\Lambda}g}_{1}}^2}{\underset{\norm{z}_2\leq 1}\max\norm{\Omega z}_{1}^2}}+\sqrt{2\ln(\eps^{-1})}+\tau}^2,
\end{equation}
then with probability at least $1-\eps$, any minimizer $\hat x$ of (\ref{eqProblemP1Noise}) satisfies
\[
\norm{x-\hat x}_2\leq\frac{2\eta}{\tau}.
\]
\end{theorem}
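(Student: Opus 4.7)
The plan is to chain together three ingredients already at hand: the stability result of Theorem~\ref{thm:rec}, Gordon's escape through a mesh (Theorem~\ref{thGordonsEscapeThroughTheMesh}), and the subdifferential-width bound in \eqref{eqGaussianWidthEuclideanDistance}--\eqref{eqEstimateGWFoygel}. The only computational task not yet carried out is identifying $\ell(\partial\norm{\Omega\cdot}_{1}(x))$ for a cosparse $x$ with cosupport $\Lambda$. After that the theorem follows by setting the deviation parameter in Gordon's inequality and combining the estimates.

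First I would compute $\ell(\partial\norm{\Omega\cdot}_1(x))$. Using the description
\[
\partial\norm{\Omega\cdot}_1(x)=\bigl\{\Omega^T\alpha:\alpha_{\Lambda^c}=\sgn(\Omega x)_{\Lambda^c},\ \norm{\alpha_\Lambda}_\infty\le 1\bigr\}
\]
together with $\langle g,\Omega^T\alpha\rangle=\langle \Omega g,\alpha\rangle$, one splits the inner product over $\Lambda^c$ and $\Lambda$. The $\Lambda^c$-part is $\langle (\Omega g)_{\Lambda^c},\sgn(\Omega x)_{\Lambda^c}\rangle$, which has zero expectation since $g$ is centered and $\sgn(\Omega x)$ is deterministic; the $\Lambda$-part is maximized over $\norm{\alpha_\Lambda}_\infty\le 1$ to give $\norm{(\Omega g)_\Lambda}_1=\norm{\Omega_\Lambda g}_1$. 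Hence $\ell(\partial\norm{\Omega\cdot}_1(x))=\mean\norm{\Omega_\Lambda g}_1$.

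Next I would combine \eqref{eqGaussianWidthEuclideanDistance} with the lemma just proved to obtain
\[
\ell(T(x)\cap\sph^{d-1})\;\le\;\sqrt{\,d-\frac{(\mean\norm{\Omega_\Lambda g}_1)^2}{\max_{\norm{z}_2\le 1}\norm{\Omega z}_1^{2}}\,}.
\]
Then I apply Gordon's theorem to the set $T=T(x)\cap\sph^{d-1}$ with the choice $t=\sqrt{2\ln(\eps^{-1})}$, which makes the failure probability $e^{-t^2/2}=\eps$. The hypothesis \eqref{eqNumberOfMeasurementsByGW} together with $E_m\ge m/\sqrt{m+1}$ gives $E_m-\ell(T)-t\ge\tau$, so with probability at least $1-\eps$,
\[
\inf_{v\in T(x),\;\norm{v}_2=1}\norm{Mv}_2\;\ge\;\tau.
\]
Since $T(x)$ is a cone this lower bound extends to all $v\in T(x)$ with $\norm{v}_2\le 1$, and Theorem~\ref{thm:rec} yields $\norm{x-\hat x}_2\le 2\eta/\tau$.

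The one genuinely nontrivial step is the width computation in the first paragraph; everything else is bookkeeping (choosing $t$ to absorb $\eps$, and rearranging $E_m\ge m/\sqrt{m+1}$ into the quadratic condition \eqref{eqNumberOfMeasurementsByGW}). I do not anticipate any technical obstacle beyond being careful with the split of $\langle \Omega g,\alpha\rangle$ over $\Lambda$ and $\Lambda^c$ and remembering that it is the sub\emph{set} $T(x)\cap\sph^{d-1}$, not the whole cone, to which Gordon's theorem is directly applied.
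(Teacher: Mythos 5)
Your proposal is correct and follows essentially the same route as the paper: compute $\ell(\partial\norm{\Omega\cdot}_1(x))=\mean\norm{\Omega_\Lambda g}_1$ by splitting $\langle \Omega g,\alpha\rangle$ over $\Lambda$ and $\Lambda^c$ (the $\Lambda^c$ term vanishing in expectation), plug this into \eqref{eqGaussianWidthEuclideanDistance}--\eqref{eqEstimateGWFoygel}, and apply Gordon's theorem with $t=\sqrt{2\ln(\eps^{-1})}$ and $E_m\geq m/\sqrt{m+1}$ before invoking Theorem~\ref{thm:rec}. The only cosmetic difference is that you make the zero-mean cancellation of the $\sgn(\Omega x)$ term explicit and correctly use the constraint $\norm{\alpha_\Lambda}_\infty\leq 1$ where the paper's displayed maximization contains a typo ($\norm{\alpha_\Lambda}_1\leq 1$).
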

\begin{proof}
Estimates (\ref{eqGaussianWidthEuclideanDistance}) and (\ref{eqEstimateGWFoygel}) imply that
\begin{equation}\label{eq:PreEstimateWithGW}
E_m-\ell\brac{T(x)\cap \sph^{d-1}}-t\geq E_m-\sqrt{d-\frac{\sbrac{\ell(\partial\norm{\Omega\cdot}_{1}(x))}^2}{\underset{\norm{z}_2\leq 1}\max\norm{\Omega z}_{1}^2}}-t.
\end{equation}
For any $z\in\partial\norm{\Omega\cdot}_{1}(x)$ there is $\alpha\in\RR^p$ with $\norm{\alpha_{\Lambda}}_{\infty}\leq 1$ such that
\[
z=\Omega^T\alpha=\Omega^T\sgn(\Omega x)+\Omega^T\alpha_{\Lambda}.
\]
Therefore,
\[
\ell(\partial\norm{\Omega\cdot}_{1}(x))=\mean\underset{z\in\partial\norm{\Omega\cdot}_{1}(x)}\max\abrac{g,z}=\mean\abrac{g,\Omega^T\sgn(\Omega x)}+\mean\underset{\norm{\alpha_{\Lambda}}_{1}\leq 1}\max\abrac{g,\Omega^T\alpha_{\Lambda}}
\]
\[
=\mean\underset{\norm{\alpha_{\Lambda}}_{1}\leq 1}\max\abrac{(\Omega g)_{\Lambda},\alpha_{\Lambda}}=\mean\norm{\Omega_{\Lambda}g}_{1}.
\]
Plugging this into (\ref{eq:PreEstimateWithGW}) gives
\[
E_m-\ell\brac{T(x)\cap \sph^{d-1}}-t\geq E_m-\sqrt{d-\frac{\brac{\mean\norm{\Omega_{\Lambda}g}_{1}}^2}{\underset{\norm{z}_2\leq 1}\max\norm{\Omega z}_{1}^2}}-t.
\]
Setting $t=\sqrt{2\ln(\eps^{-1})}$, the choice of $m$ in (\ref{eqNumberOfMeasurementsByGW}) guarantees that
\[
E_m-\ell\brac{T(x)\cap \sph^{d-1}}-t\geq\frac{m}{\sqrt{m+1}}-\sqrt{d-\frac{\brac{\mean\norm{\Omega_{\Lambda}g}_{1}}^2}{\underset{\norm{z}_2\leq 1}\max\norm{\Omega z}_{1}^2}}-\sqrt{2\ln(\eps^{-1})}\geq\tau.
\]
The monotonicity of probability together with Theorems~\ref{thm:rec} and \ref{thGordonsEscapeThroughTheMesh} yields
\[
\begin{aligned}
\PP\brac{\underset{{x}\in T(x)\cap\sph^{d-1}}\inf\norm{Mx}_2\geq\tau}&\geq\PP\brac{\underset{x\in T(x)\cap\sph^{d-1}}\inf\norm{Mx}_2\geq E_m-\ell\brac{T(x)\cap\sph^{d-1}}-t}\\
&\geq 1-\eps.
\end{aligned}
\]
This concludes the proof.
\end{proof}

\section{Explicit choice of the analysis operator}

Theorem \ref{thNumberOfMeasurementsForDecomposableNorms} can be further refined for special choices of the operator $\Omega$. We start with the one-dimensional difference operator.

\begin{theorem}\label{thNumberOfMeasurementsForL1NormDifOperator}
Let $\Omega:\RR^d\to\RR^{d-1}$ be a one-dimensional difference operator. Let $x\in\RR^d$ be $\ell$-cosparse with respect to $\Omega$ and $s=d-1-\ell$.
For a random draw $M\in\RR^{m\times d}$ of a Gaussian matrix, let noisy measurements $y=Mx+w$ be given with $\norm{w}_2\leq\eta$ and $0<\eps<1$. If
\begin{equation}\label{eqNumberOfMeasurementsForL1NormDifOperator}
\frac{m^2}{m+1}\geq \brac{\sqrt{d\brac{1-\frac{1}{\pi}\brac{1-\frac{s+1}{d}}^2}}+\sqrt{2\ln(\eps^{-1})}+\tau}^2,
\end{equation}
then with probability at least $1-\eps$, any minimizer $\hat x$ of (\ref{eqProblemP1Noise}) satisfies
\[
\norm{x-\hat x}_2\leq\frac{2\eta}{\tau}.
\]
\end{theorem}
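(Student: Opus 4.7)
The plan is to invoke Theorem~\ref{thNumberOfMeasurementsForDecomposableNorms} and simply evaluate, in the special case of the one-dimensional difference operator, the two ingredients
\[
\mean\norm{\Omega_{\Lambda}g}_1 \quad\text{and}\quad \underset{\norm{z}_2\leq 1}\max\norm{\Omega z}_1
\]
that appear in (\ref{eqNumberOfMeasurementsByGW}), and then show that plugging the resulting quantities into the theorem produces exactly (\ref{eqNumberOfMeasurementsForL1NormDifOperator}).

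First I would handle the expectation. The rows of $\Omega$ are $\omega_i = e_{i+1}-e_i$, so for standard Gaussian $g\in\RR^d$ each coordinate $(\Omega g)_i = g_{i+1}-g_i$ is a centered Gaussian with variance $2$. Hence $\mean\abs{(\Omega g)_i} = \sqrt{2}\cdot\sqrt{2/\pi} = 2/\sqrt{\pi}$, and since $\#\Lambda = \ell = d-1-s$, linearity of expectation yields
\[
\mean\norm{\Omega_{\Lambda}g}_1 = (d-1-s)\cdot\frac{2}{\sqrt{\pi}}.
\]
Note that the lack of independence between the entries of $\Omega g$ is irrelevant here, since we only need the expectation of a sum.

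Next I would bound $\max_{\norm{z}_2\leq 1}\norm{\Omega z}_1$ from above. The elementary chain
\[
\norm{\Omega z}_1 = \sum_{i=1}^{d-1}\abs{z_{i+1}-z_i} \leq \sum_{i=1}^{d-1}\brac{\abs{z_i}+\abs{z_{i+1}}} \leq 2\norm{z}_1 \leq 2\sqrt{d}\,\norm{z}_2
\]
gives $\max_{\norm{z}_2\leq 1}\norm{\Omega z}_1 \leq 2\sqrt{d}$, which is all we need since Theorem~\ref{thNumberOfMeasurementsForDecomposableNorms} only requires an upper bound on this quantity in the denominator (equivalently, a lower bound on the correcting term). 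This is the only step that requires any real thought; I expect no obstacle here because the simple triangle-inequality bound is already tight enough to recover the claimed constant $1/\pi$.

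Combining the two estimates gives
\[
\frac{\brac{\mean\norm{\Omega_{\Lambda}g}_1}^2}{\underset{\norm{z}_2\leq 1}\max\norm{\Omega z}_1^2} \geq \frac{(d-1-s)^2\cdot 4/\pi}{4d} = \frac{d}{\pi}\brac{1-\frac{s+1}{d}}^2,
\]
so
\[
d-\frac{\brac{\mean\norm{\Omega_{\Lambda}g}_1}^2}{\underset{\norm{z}_2\leq 1}\max\norm{\Omega z}_1^2} \leq d\brac{1-\frac{1}{\pi}\brac{1-\frac{s+1}{d}}^2}.
\]
Substituting this upper bound into condition (\ref{eqNumberOfMeasurementsByGW}) shows that the hypothesis (\ref{eqNumberOfMeasurementsForL1NormDifOperator}) implies (\ref{eqNumberOfMeasurementsByGW}), and the conclusion follows directly from Theorem~\ref{thNumberOfMeasurementsForDecomposableNorms}.
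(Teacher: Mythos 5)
Your proposal is correct and follows exactly the paper's own argument: the same evaluation $\mean\norm{\Omega_{\Lambda}g}_1=\frac{2}{\sqrt\pi}(d-1-s)$ via $\norm{\omega_i}_2=\sqrt 2$, the same triangle-inequality bound $\max_{\norm{z}_2\leq 1}\norm{\Omega z}_1\leq 2\sqrt d$, and the same substitution into Theorem~\ref{thNumberOfMeasurementsForDecomposableNorms}. Your explicit remark that only an upper bound on $\max_{\norm{z}_2\leq 1}\norm{\Omega z}_1$ is needed (so the inequality goes the right way) is a point the paper leaves implicit.
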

\begin{proof}
According to the definition of $\Omega$, for any $z\in\RR^d$ 
\begin{equation}\label{eq:PropertiesOfDiffOperator}
\norm{\Omega z}_1=\sum\limits_{i=1}^{d-1}\abs{z_{i+1}-z_i}\leq 2\norm{z}_1\leq 2\sqrt d\norm{z}_2.
\end{equation}
Hence, $\underset{\norm{z}_2\leq 1}\max\norm{\Omega z}_1^2\leq 4d$.
The rows $\omega_j$ of $\Omega\in\RR^{(d-1)\times d}$ satisfy $\norm{\omega_i}_2=\sqrt 2$ and the properties of the Gaussian distribution imply
\begin{equation}\label{eq:PropertiesGaussianDistribution}
\mean\norm{\Omega_{\Lambda}g}_1=\mean\sum_{i\in\Lambda}\abs{\abrac{g,\omega_i}}=\sqrt{\frac{2}{\pi}}\sum\limits_{i\in\Lambda}\norm{\omega_i}_2=\frac{2}{\sqrt\pi}(d-1-s).
\end{equation}
Plugging (\ref{eq:PropertiesOfDiffOperator}) and (\ref{eq:PropertiesGaussianDistribution}) into (\ref{eqNumberOfMeasurementsByGW}) implies the desired result.
\end{proof}
A comparison of the theoretical bound to the bound obtained from the numerical experiments is shown on Figure \ref{fig:DiffOperatorNumericalExperiments}. We considered signals in $\RR^{200}$. We ran the algorithm and counted the number of times the signal was recovered correctly out of 200 trials. A reconstruction error of $10^{-5}$ was considered as a successful recovery. Each pixel intensity represents
the ratio of the signals recovered perfectly with black being 100\% success. 
\begin{figure}
\centering
\includegraphics[scale=0.3]{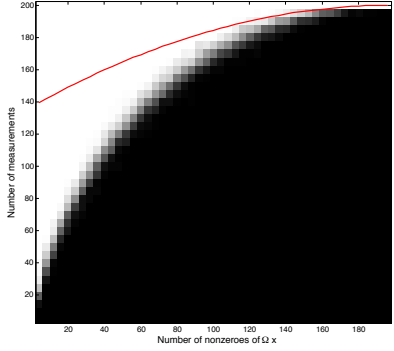}
\caption{The red line corresponds to the theoretical bound omitting the terms of lower order.}
\label{fig:DiffOperatorNumericalExperiments}
\end{figure}
We would like to point out that, even though the theoretical guarantee requires around 90\% of the samples to be taken, when 80 out of 199 entries of the gradient are non-zero, minimizing the TV-norm would still provide better performance than simply applying Tikhonov regularization with 90\% samples. We present a comparison of the results of the two reconstruction algorithms in Figure \ref{fig:ComparisonTVandLS}.
\begin{figure}
\minipage{0.32\textwidth}
\includegraphics[width=\linewidth]{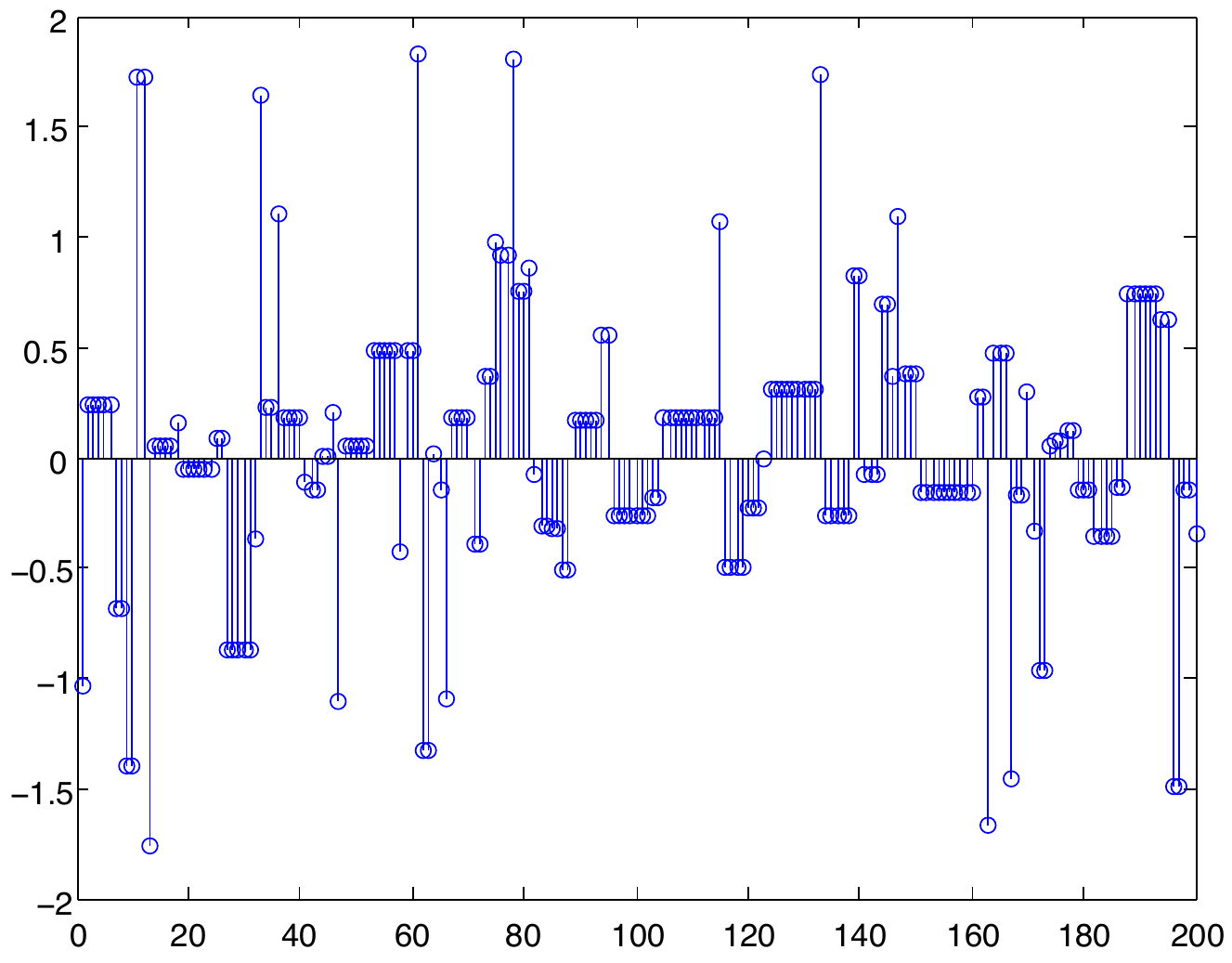}
\endminipage\hfill
\minipage{0.32\textwidth}
  \includegraphics[width=\linewidth]{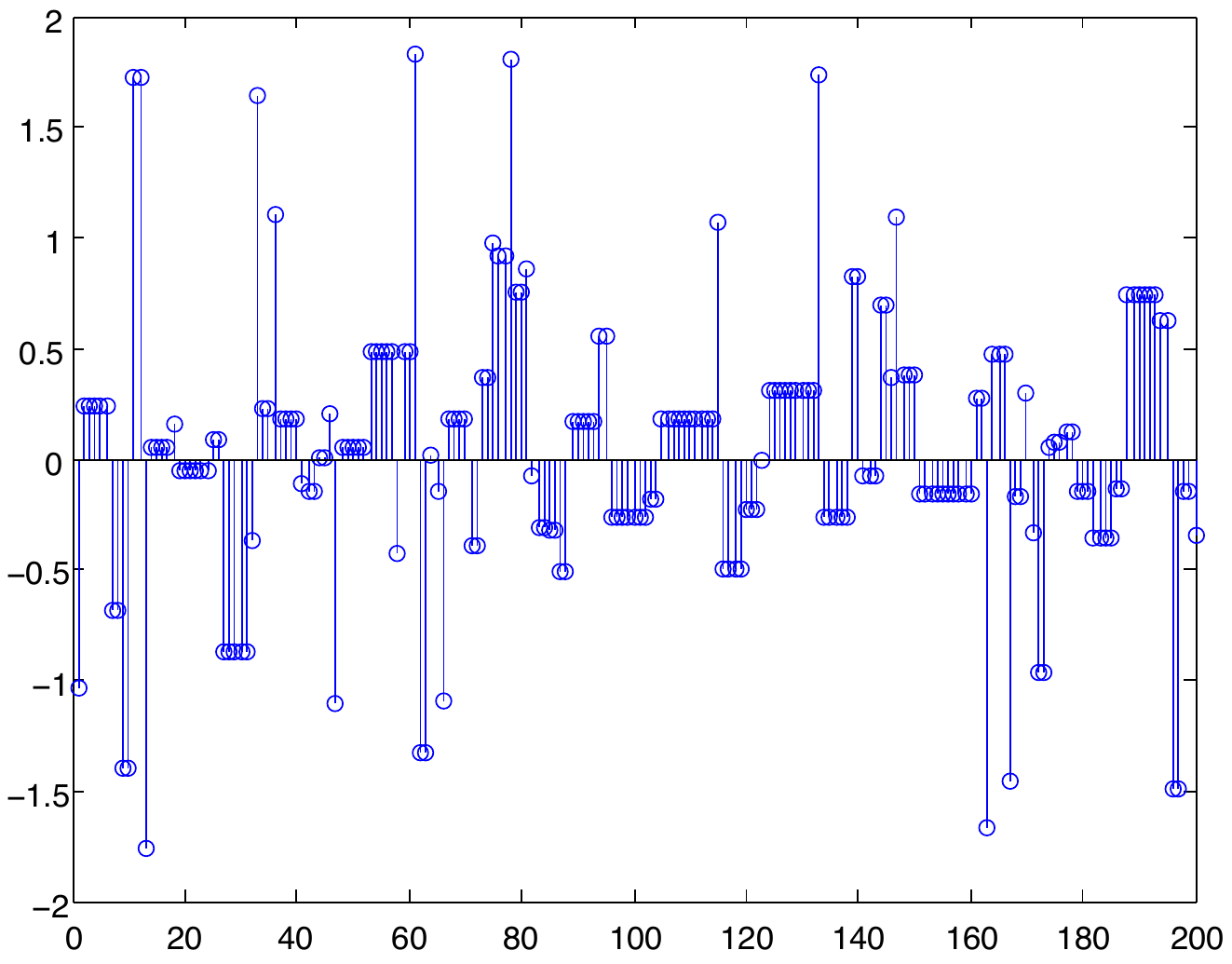}
\endminipage\hfill
\minipage{0.32\textwidth}%
  \includegraphics[width=\linewidth]{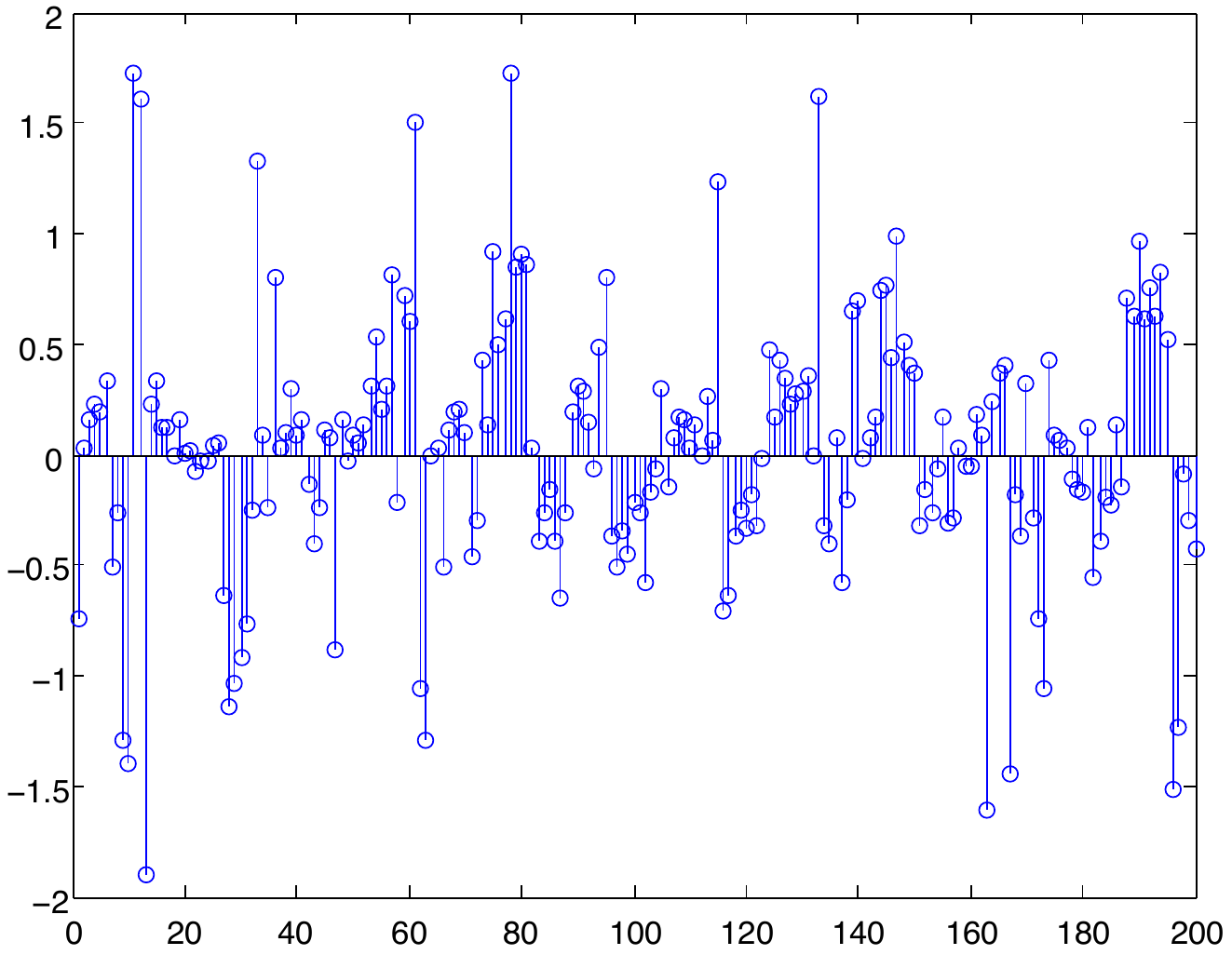}
\endminipage
 \caption{The first plot depicts an initial signal in $\RR^{200}$, whose gradient has 80 non-zero entries. The second and third one correspond to the reconstruction from 180 measurements via TV-minimization (exact) and Tikhonov regularization respectively.}\label{fig:ComparisonTVandLS}
\end{figure}

Theorem \ref{thNumberOfMeasurementsForL1NormDifOperator} can be extended to two dimensions. Let $X\in\RR^{d\times d}$. The two-dimensional difference operator collects all vertical and horizontal derivatives of $X$ into a single vector. If we concatenate the columns of $X$ into the vector $x\in\RR^{d^2}$, then we can represent this operator by the matrix $\Omega\in\RR^{2d(d-1)\times d^2}$, whose action is given by
\[
\Omega x=\brac{X_{21}-X_{11},\ldots,X_{dd}-X_{d-1d},X_{12}-X_{11},\ldots,X_{dd}-X_{dd-1}}^T.
\]
Each row $\omega_i$ of $\Omega$ has exactly two non-zero entries with values $-1$ and $1$ at the proper locations.
\begin{theorem}\label{th2DNumberOfMeasurementsForDifOperator}
Let $\Omega\in\RR^{2d(d-1)\times d^2}$ define a two-dimensional difference operator. Let $x\in\RR^{d^2}$ be $\ell$-cosparse with respect to $\Omega$ and $s=2d(d-1)-\ell$.
For a random draw $M\in\RR^{m\times d^2}$ of a Gaussian matrix, let noisy measurements $y=Mx+w$ be given with $\norm{w}_2\leq\eta$ and $0<\eps<1$. If
\begin{equation}\label{eq2DNumberOfMeasurementsForDifOperator}
\frac{m^2}{m+1}\geq \brac{\sqrt{d^2\brac{1-\frac{1}{\pi}\brac{1-\frac{1}{d}-\frac{s}{2d^2}}^2}}+\sqrt{2\ln(\eps^{-1})}+\tau}^2,
\end{equation}
then with probability at least $1-\eps$, any minimizer $\hat x$ of (\ref{eqProblemP1Noise}) satisfies
\[
\norm{x-\hat x}_2\leq\frac{2\eta}{\tau}.
\]
\end{theorem}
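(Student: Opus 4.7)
The plan is to invoke Theorem~\ref{thNumberOfMeasurementsForDecomposableNorms} with ambient dimension $d^2$ in place of $d$, and reduce the required estimate \eqref{eqNumberOfMeasurementsByGW} to the concrete bound \eqref{eq2DNumberOfMeasurementsForDifOperator}. To do this I need to (i) control the factor $\max_{\|z\|_2\leq 1}\|\Omega z\|_1^2$ that appears in the denominator, and (ii) evaluate $\mean\|\Omega_\Lambda g\|_1$ explicitly. Both steps mirror the 1D proof of Theorem~\ref{thNumberOfMeasurementsForL1NormDifOperator}, and the main bookkeeping concern is keeping track of the fact that the ambient space is now $\RR^{d^2}$ and that the analysis operator has $p=2d(d-1)$ rows.

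For step (i), I observe that each row $\omega_i$ of $\Omega$ has exactly two non-zero entries with values $\pm 1$ at adjacent grid locations. Because a given coordinate $x_{i,j}$ appears in at most four rows of $\Omega$ (two horizontal differences and two vertical differences, fewer at the boundary), one has
\[
\|\Omega z\|_1 \leq 4\|z\|_1 \leq 4\sqrt{d^2}\,\|z\|_2 = 4d\,\|z\|_2,
\]
giving $\max_{\|z\|_2\leq 1}\|\Omega z\|_1^2 \leq 16 d^2$, the direct analogue of the bound used in Theorem~\ref{thNumberOfMeasurementsForL1NormDifOperator}.

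For step (ii), each row satisfies $\|\omega_i\|_2=\sqrt{2}$, so for a standard Gaussian vector $g\in\RR^{d^2}$ the inner product $\abrac{g,\omega_i}$ is centered Gaussian with variance $2$, and therefore $\mean|\abrac{g,\omega_i}| = \sqrt{2/\pi}\cdot\sqrt{2} = 2/\sqrt{\pi}$. Summing over the cosupport $\Lambda$ of size $\ell = 2d(d-1)-s$ gives
\[
\mean\|\Omega_\Lambda g\|_1 = \frac{2}{\sqrt{\pi}}\bigl(2d(d-1)-s\bigr) = \frac{4d^2}{\sqrt{\pi}}\brac{1-\frac{1}{d}-\frac{s}{2d^2}}.
\]

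Combining the two estimates,
\[
d^2 - \frac{(\mean\|\Omega_\Lambda g\|_1)^2}{\max_{\|z\|_2\leq 1}\|\Omega z\|_1^2} \leq d^2 - \frac{16 d^4/\pi \cdot \brac{1-\tfrac{1}{d}-\tfrac{s}{2d^2}}^2}{16 d^2} = d^2\brac{1-\frac{1}{\pi}\brac{1-\frac{1}{d}-\frac{s}{2d^2}}^2},
\]
and plugging this into \eqref{eqNumberOfMeasurementsByGW} (with ambient dimension $d^2$) produces exactly \eqref{eq2DNumberOfMeasurementsForDifOperator}. The conclusion on $\|x-\hat x\|_2$ then follows from Theorem~\ref{thNumberOfMeasurementsForDecomposableNorms}. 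There is no substantive obstacle beyond the $\ell_1$-to-$\ell_2$ bound in step (i); everything else is an arithmetic reparametrization.
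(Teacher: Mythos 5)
Your proposal is correct and follows exactly the same route as the paper: bound $\max_{\|z\|_2\leq 1}\|\Omega z\|_1^2\leq 16d^2$ via $\|\Omega z\|_1\leq 4\|z\|_1\leq 4d\|z\|_2$, compute $\mean\|\Omega_\Lambda g\|_1=\frac{2}{\sqrt{\pi}}(2d(d-1)-s)$ using $\|\omega_i\|_2=\sqrt{2}$, and substitute into \eqref{eqNumberOfMeasurementsByGW}. The arithmetic reparametrization to the form \eqref{eq2DNumberOfMeasurementsForDifOperator} checks out.
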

\begin{proof}
Let $z\in\RR^{d^2}$. Then
\[
\norm{\Omega z}_1=\sum_{j=1}^d\sum_{i=1}^{d-1}\abs{z_{i+1j}-z_{ij}}+\sum_{i=1}^d\sum_{j=1}^{d-1}\abs{z_{ij+1}-z_{ij}}\leq 4\norm{z}_1\leq 4d\norm{z}_2
\]
and it follows that
\[
\underset{\norm{z}_2\leq 1}\max\norm{\Omega z}_1^2\leq 16 d^2.
\]
As in the one-dimensional case,
\[
\mean\norm{\Omega_{\Lambda}g}_1=\sqrt{\frac{2}{\pi}}\sum_{i\in\Lambda}\norm{\omega_i}_2=\frac{2}{\sqrt{\pi}}(2d(d-1)-s).
\]
As the final step we apply formula (\ref{eqNumberOfMeasurementsByGW}).
\end{proof}

For an $\Omega$ being a frame, we obtain the following bound on the required number of measurements.
\begin{theorem}\label{thNumberOfMeasurementsForL1NormFrameOperator}
Let $\Omega:\RR^d\to\RR^{p}$ be a frame with an upper frame bound $B$. Let $x\in\RR^d$ be cosparse with cosupport $\Lambda$. For a random draw $M\in\RR^{m\times d}$ of a Gaussian matrix, let noisy measurements $y=Mx+w$ be given with $\norm{w}_2\leq\eta$ and $0<\eps<1$. If
\begin{equation}\label{eqNumberOfMeasurementsForL1NormFrameOperator}
\frac{m^2}{m+1}\geq \brac{\sqrt{d-\frac{2}{\pi}\frac{\brac{\sum_{i\in\Lambda}\norm{\omega_i}_2}^2}{pB}}+\sqrt{2\ln(\eps^{-1})}+\tau}^2,
\end{equation}
then with probability at least $1-\eps$, any minimizer $\hat x$ of (\ref{eqProblemP1Noise}) satisfies
\[
\norm{x-\hat x}_2\leq\frac{2\eta}{\tau}.
\]
\end{theorem}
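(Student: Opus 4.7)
The plan is to apply Theorem~\ref{thNumberOfMeasurementsForDecomposableNorms} directly; what remains is to control the two quantities
$\mean \|\Omega_\Lambda g\|_1$ and $\max_{\|z\|_2 \leq 1}\|\Omega z\|_1$ appearing in \eqref{eqNumberOfMeasurementsByGW}, using only the frame property \eqref{eq:Frame} and basic Gaussian identities.

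For the numerator, I would repeat verbatim the computation already used in the proof of Theorem~\ref{thNumberOfMeasurementsForL1NormDifOperator}: since $\langle g,\omega_i\rangle$ is a centered Gaussian with variance $\|\omega_i\|_2^2$, one has $\mean|\langle g,\omega_i\rangle| = \sqrt{2/\pi}\,\|\omega_i\|_2$, and summing over $i\in\Lambda$ gives
\[
\mean \|\Omega_\Lambda g\|_1 \;=\; \sqrt{\tfrac{2}{\pi}} \sum_{i\in\Lambda}\|\omega_i\|_2.
\]

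For the denominator I would use the frame upper bound together with the Cauchy--Schwarz inequality in $\RR^p$ to pass from $\ell_1$ to $\ell_2$: for every $z\in\RR^d$,
\[
\|\Omega z\|_1 \;=\; \sum_{j=1}^p |\langle \omega_j,z\rangle| \;\leq\; \sqrt{p}\,\|\Omega z\|_2 \;\leq\; \sqrt{pB}\,\|z\|_2,
\]
so $\max_{\|z\|_2\leq 1}\|\Omega z\|_1^2 \leq pB$. Substituting both estimates into \eqref{eqNumberOfMeasurementsByGW} yields
\[
d - \frac{(\mean \|\Omega_\Lambda g\|_1)^2}{\max_{\|z\|_2\leq 1}\|\Omega z\|_1^2} \;\leq\; d - \frac{2}{\pi}\,\frac{\bigl(\sum_{i\in\Lambda}\|\omega_i\|_2\bigr)^2}{pB},
\]
which is exactly the quantity under the square root in \eqref{eqNumberOfMeasurementsForL1NormFrameOperator}. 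The conclusion then follows immediately from Theorem~\ref{thNumberOfMeasurementsForDecomposableNorms}.

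There is no real obstacle here; the only nontrivial modeling decision is which upper bound to use for $\max_{\|z\|_2\leq 1}\|\Omega z\|_1$. The naive bound $\sum_j \|\omega_j\|_2$ would give a dimension-dependent factor that could wash out the gain from the cosparsity term; the Cauchy--Schwarz route combined with the frame constant $B$ is what makes the resulting bound sharp enough to guarantee $m<d$ and scale-invariant (since both numerator and denominator scale quadratically under $\Omega \mapsto c\Omega$). Everything else is a mechanical substitution into the already-proved Theorem~\ref{thNumberOfMeasurementsForDecomposableNorms}.
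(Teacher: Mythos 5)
Your proposal is correct and follows exactly the paper's own route: the expectation $\mean\norm{\Omega_\Lambda g}_1=\sqrt{2/\pi}\sum_{i\in\Lambda}\norm{\omega_i}_2$ is carried over verbatim from the proof of Theorem~\ref{thNumberOfMeasurementsForL1NormDifOperator}, and the paper likewise bounds $\max_{\norm{z}_2\leq 1}\norm{\Omega z}_1^2\leq p B$ via the Cauchy--Schwarz inequality and the upper frame bound before substituting into \eqref{eqNumberOfMeasurementsByGW}. No gaps; the argument matches the paper's proof in both structure and detail.
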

\begin{proof}
The only difference to the proof of Theorem \ref{thNumberOfMeasurementsForL1NormDifOperator} is the following estimate, which is due to the Cauchy inequality and the fact that $\Omega$ is a frame:
\[
\underset{\norm{z}_2\leq 1}\max\norm{\Omega z}_1^2\leq p\underset{\norm{z}_2\leq 1}\max\norm{\Omega z}_2^2\leq p\underset{\norm{z}_2\leq 1}\max B\norm{z}_2^2\leq pB.
\]
\end{proof}
The bound on the number of measurements (\ref{eqNumberOfMeasurementsForL1NormFrameOperator}) does not have an explicit dependence on the ratio of the frame bounds.
So it can not be directly compared to the results provided in \cite{kara13}, see also \eqref{m:prev:frame}.
However, an important observation is that the right hand side in (\ref{eqNumberOfMeasurementsForL1NormFrameOperator}) is strictly less than $d$ for any $p$ and any number of elements in the cosupport of  the signal
(provided $\varepsilon^{-1}$ and $\tau$ are not too large).


\section{Conclusions}

We have presented results on the nonuniform recovery from Gaussian random measurements of analysis-sparse signals with respect to the one- and two-dimensional difference operators or with respect to a frame. The derived bound on required measurements is always smaller than the ambient dimension of a signal and it is particularly suitable for the case when the sparsity of the analysis representation of the signal is not very small. 


\section{Acknowledgements}
M.~Kabanava and H.~Rauhut acknowledge support by the European Research Council through the grant StG 258926. H. Zhang is supported by China NSF Grants No. 61201328.



\bibliography{TVGaussBib}
\bibliographystyle{abbrv}

\end{document}